\title{\boldmath Canonical analysis of $BF$ gravity in $n$ dimensions}
\author[a]{Mariano Celada,}
\author[b]{Ricardo Escobedo}
\author[b,1]{and Merced Montesinos\note{Corresponding author.}}
\affiliation[a]{Centro de Ciencias Matem\'{a}ticas, Universidad Nacional Aut\'{o}noma de M\'{e}xico,\\
	UNAM-Campus Morelia, Apartado Postal 61-3, Morelia, Michoac\'{a}n 58090, Mexico}
\affiliation[b]{Departamento de F\'{i}sica, Cinvestav,\\ Avenida Instituto Polit\'{e}cnico Nacional 2508,
	San Pedro Zacatenco, 07360 Gustavo A. Madero, Ciudad de M\'exico, Mexico}
\emailAdd{mcelada@matmor.unam.mx}
\emailAdd{rescobedo@fis.cinvestav.mx}
\emailAdd{merced@fis.cinvestav.mx}
\abstract{In this paper we perform in a manifestly $SO(n-1,1)$ [or, alternatively $SO(n)$] covariant fashion, the canonical analysis of general relativity in $n$ dimensions written as a constrained $BF$ theory. Since the Lagrangian action of the theory can be written in two classically equivalent ways, we analyze each case separately. We show that for either action the canonical analysis can be accomplished without introducing second-class constraints during the whole process. Furthermore, in each case the resulting Hamiltonian formulation is the same as the canonical formulation with only first-class constraints recently obtained in Ref.~\cite{PalatininD} from the $n$-dimensional Palatini action.}
\newcommand{\uac}[1]{\underaccent{\tilde}{#1}}
\newtheorem{theorem}{Theorem}
\newtheorem{lemma}[theorem]{Lemma}
\begin{document} 
\maketitle
\flushbottom

\section{\label{sec:Intro} Introduction}
General relativity in dimension $n>4$ with vanishing cosmological constant can be reformulated as a constrained $BF$ theory~\cite{Freid_Puz}. This formulation generalizes the very well-known four-dimensional case to higher dimensions, which plays a fundamental role in the path integral quantization of gravity giving rise to the spinfoam models for gravity~\cite{perez2013,rovelli2014covariant}. In turn, spinfoam models for gravity are thought of as the covariant version of loop quantum gravity~\cite{RovBook,ThieBook}, which follows the canonical quantization program. Since both quantum approaches supplement each other, the knowledge of the canonical structure of the $BF$ models of gravity can help to establish connections between them~\cite{alexandrov2012spin}. Because of this, we are particularly interested in the canonical descriptions of $BF$ gravity (see Ref.~\cite{cqgrevBF} for a review on the subject).

Recently, a manifestly Lorentz-covariant canonical formulation for four-dimensional general relativity with the Immirzi parameter was reported in Ref.~\cite{Mont_Rom_Cel_2020} without having to introduce second-class constraints during the entire Hamiltonian analysis (see Refs.~\cite{Montesinos1801, MontRomEscCel} for a derivation with second-class constraints). The same formulation was also obtained from the canonical analysis of $BF$ gravity with the Immirzi parameter while again avoiding the presence of second-class constraints~\cite{BFNoSCC}, thus simplifying previous analyses (see for instance Ref.~\cite{CelMontcqg2920}).

On the other hand, it was also recently carried out the manifestly Lorentz-covariant canonical analysis with no second-class constraints of the $n$-dimensional Palatini action~\cite{PalatininD}. Such a canonical analysis had been previously reported after explicitly solving the second-class constraints that arise in the usual canonical analysis of the $n$-dimensional Palatini action~\cite{Bodendorfer_2013}. Furthermore, the canonical analysis of the $BF$-type action of Ref.~\cite{Freid_Puz} has also been performed, but introducing second-class constraints too~\cite{Thiemann_2013}. 

Since the action of Ref.~\cite{Freid_Puz} is classically equivalent to the $n$-dimensional Palatini action, and given that the canonical analysis of the latter action can be achieved without bringing in second-class constraints, it should be feasible to carry out the canonical analysis of the former action without their presence as well. In this paper we show that it is indeed possible, and we establish that the results obtained in Ref.~\cite{PalatininD} (with a vanishing cosmological constant) can also be reached from the canonical analysis of the $BF$-type action of Ref.~\cite{Freid_Puz} without having to introduce second-class constraints during the whole Hamiltonian analysis. Therefore, the current work can be regarded as the generalization to $n$ dimensions of the results of Ref.~\cite{BFNoSCC}.

Because the Lagrangian action of Ref.~\cite{Freid_Puz} can be expressed in two classically equivalent forms, in what follows we study both actions separately. As expected, the same canonical formulation emerges from them both, although we follow different paths in each case.

\paragraph*{Conventions.} Let us consider a principal bundle over a spacetime manifold $M$ of dimension $n$ ($n\geq$ 4). We assume the structure group is either $SO(n-1,1)$ [or, $SO(n)$, depending of the signature involved]. To perform the canonical analysis, we foliate the manifold $M$ by hypersurfaces $\Sigma_t$ homeomorphic to an $(n-1)$-dimensional manifold $\Sigma$ for all $t\in \mathbb{I}\subset\mathbb{R}$, such that $M$ has the global topology $\mathbb{I}\times\Sigma$ and $\Sigma$ has no boundary, $\partial\Sigma=\varnothing$. Points on $M$ are labeled with coordinates $x^{\mu}=(t,x^a)$, where $x^a$ ($a,b,\ldots$ take the values $1,\ldots,n-1$) are coordinates on $\Sigma$ ($t=constant$ specifies $\Sigma_t\sim\Sigma$). Spacetime indices are designated by greek letters $\mu,\nu,\ldots=\left\{t,a\right\}$, where $a$ labels spatial components of tensors, whereas $t$ refers to the time component (we represent both the time component of tensors and the time coordinate by the same letter $t$). Internal indices $I,J,\ldots$ take the values $0,\ldots,n-1$ and are raised and lowered with the internal metric $(\eta_{IJ})=\text{diag}(\sigma,\underbrace{1,\ldots,1}_{n-1})$, where $\sigma=-1$ for $SO(n-1,1)$ and $\sigma=1$ for $SO(n)$.  For any kind of indices, we define sets of $n-4$ and $n-5$ totally antisymmetric indices respectively by $[A]:=[A_1 \ldots A_{n-4}]=A_1 \ldots A_{n-4}$ and $\langle A \rangle:=[A_1 \ldots A_{n-5}] = A_1 \ldots A_{n-5}$. The antisymmetrizer is defined by
\begin{eqnarray}
	V^{[A_1 \ldots A_k]}:=\frac{1}{k!}\sum_{P\in S_k}\text{sgn} (P)V^{A_{P(1)} \ldots A_{P(k)}},
\end{eqnarray}
where the sum is over all the elements of the permutation group $S_k$ of order $k$. The weight of tensor densities is either denoted with a tilde ``$\sim$'' when possible or explicitly mentioned somewhere else in the paper. The $SO(n-1,1)$ or $SO(n)$ totally antisymmetric tensor $\epsilon_{I_1 \ldots  I_n}$ is such that $\epsilon_{01\ldots n-1}=1$. Similarly, the totally antisymmetric tensor density of weight $1$ ($-1$) is denoted as $\tilde{\eta}^{\mu_{1} \ldots \mu_{n}}$ ($\underaccent{\tilde}{\eta}_{\mu_{1} \ldots \mu_{n}}$) and satisfies $\tilde{\eta}^{t 1 \ldots n-1}=1$ ($\underaccent{\tilde}{\eta}_{t 1 \ldots n-1}=1$).

%%%%%%%%%%%%%
\section{$BF$-type action for general relativity}\label{secBF1}
\subsection{The action}

General relativity in $n$ dimensions can be expressed as a constrained $BF$ theory~\cite{Freid_Puz} (recall that $BF$ theory by itself defines a topological field theory~\cite{horowitz1989,birmingham1991,CAICEDO,cattaneo2000}). The heart of the formulation is a quadratic constraint on the $B$ field, which allows us to recover the $SO(n-1,1)$ or $SO(n)$ frame and reduces the $BF$-type action to the Palatini (or Einstein-Cartan) action. As explained in Ref.~\cite{Freid_Puz}, there are two classically equivalent formulations of the theory: Both action principles include a term with a product of $B$'s that can be antisymetrized in either the spacetime or the internal indices. We shall consider below both actions, which although lead to the same canonical theory, follow a different path in each case. Let us first consider the $BF$-type action that involves the antisymetrization in the spacetime indices of the product of $B$'s, which is given by
	\begin{eqnarray}
		S[A,\tilde{B},\Phi,\tilde{\mu}]=&&\int_M d^nx\left(\tilde{B}^{\mu\nu IJ}F_{\mu\nu IJ}+\Phi^{[\alpha]IJKL}\uac{\eta}_{[\alpha]\mu\nu\lambda\rho}\tilde{B}^{\mu\nu}{}_{IJ}\tilde{B}^{\lambda\rho}{}_{KL}\right.\notag\\
		&&\left.+\tilde{\mu}_{[\alpha]}{}^{[M]}\epsilon_{[M]IJKL}\Phi^{[\alpha]IJKL}\right),\label{BFact1}
	\end{eqnarray}
where $\tilde{B}^{\mu\nu}_{IJ}$ are bivectors of weight  $1$ taking values in the algebra $\mathfrak{so}(n-1,1)$ or $\mathfrak{so}(n)$; $F_{\mu\nu}{}^{IJ}:=\partial_{\mu} A_{\nu}{}^{IJ} - \partial_{\nu} A_{\mu}{}^{IJ} + A_{\mu}{}^I{}_K A_{\nu}{}^{KJ} - A_{\nu}{}^I{}_K A_{\mu}{}^{KJ}$ is the curvature of the $SO(n-1,1)$ or $SO(n)$ connection $A_{\mu IJ}$; $\Phi^{[\alpha]IJKL}$ is a tensor in both spacetime and internal indices that  satisfies $\Phi^{[\alpha]IJKL}=-\Phi^{[\alpha]JIKL}=-\Phi^{[\alpha]IJLK}=\Phi^{[\alpha]KLIJ}$; and $\tilde{\mu}_{[\alpha]}{}^{[M]}$ is a tensor density of weight $1$. Both $\Phi^{[\alpha]IJKL}$ and $\tilde{\mu}_{[\alpha]}{}^{[M]}$ play the role of Lagrange multipliers. An important fact about this action is that it does not feature a spacetime metric at all, which manifests the background independence of the theory; the metric itself can be regarded as a derived object. The action~\eqref{BFact1} is not the one extensively used in Ref.~\cite{Freid_Puz}, but it can be roughly obtained from it by exchanging the roles of the spacetime and internal indices in the last two terms. Notice that the analog of the last term of~\eqref{BFact1}, which involves $\tilde{\mu}_{[\alpha]}{}^{[M]}$ and imposes an additional condition on $\Phi^{[\alpha]IJKL}$, is not explicitly exhibited in the action of Ref.~\cite{Freid_Puz}.

\subsection{Einstein equations}

The variation of the action~\eqref{BFact1} with respect to the independent variables leads to
\begin{subequations}
	\begin{eqnarray}
		&&\delta A:\  D_{\mu}\tilde{B}^{\mu\nu IJ}=0,\label{eqmot1}\\
		&&\delta\tilde{B}:\ F_{\mu\nu}{}^{IJ}+2\Phi^{[\alpha]IJKL}\uac{\eta}_{[\alpha]\mu\nu\lambda\rho}\tilde{B}^{\lambda\rho}{}_{KL}=0,\label{eqmot2}\\
		&&\delta \Phi:\ \uac{\eta}_{[\alpha]\mu\nu\lambda\rho}\tilde{B}^{\mu\nu}{}_{IJ}\tilde{B}^{\lambda\rho}{}_{KL}+\tilde{\mu}_{[\alpha]}{}^{[M]}\epsilon_{[M]IJKL}=0, \label{eqmot3}\\
		&&\delta\tilde{\mu}:\ \epsilon_{[M]IJKL}\Phi^{[\alpha]IJKL}=0, \label{eqmot4}
	\end{eqnarray}
\end{subequations}
where $D_{\mu}$ is the ``$\mu$'' component of the $SO(n-1,1)$ or $SO(n)$ covariant derivative, $D_{\mu}T^{IJ\ldots}:=\partial_{\mu} T^{IJ\ldots}+A_{\mu}{}^I{}_K T^{KJ\ldots}+A_{\mu}{}^J{}_K T^{IK\ldots}+ \cdots$.

It is instructive to see how these equations give rise to the dynamics of general relativity. Equation~\eqref{eqmot3} implies that there exists a nondegenerate frame $e^{\mu}{}_I$ such that~\cite{Freid_Puz}
\begin{equation}
	\tilde{B}^{\mu\nu}{}_{IJ}=e\ e^{[\mu}{}_I e^{\nu]}{}_J,\label{solB}
\end{equation}
where $e:=\det(e_{\mu}{}^I)$ for $e_{\mu}{}^I$ the inverse of $e^{\mu}{}_I$. With this at hand, the spacetime metric can be defined as $g_{\mu\nu}:=e_{\mu}{}^Ie_{\nu}{}^J\eta_{IJ}$. Replacing~\eqref{solB} in~\eqref{eqmot1}, we obtain that $T_{\mu\nu}{}^I:=D_{\mu}e_{\nu}{}^I-D_{\nu}e_{\mu}{}^I$ vanishes, that is, $A_{\mu IJ}$ is torsion-free. This, together with the compatibility of $A_{\mu IJ}$ with the internal metric $\eta_{IJ}$ [because $A_{\mu IJ}$ is an $SO(n-1,1)$ or $SO(n)$ connection], implies that $A_{\mu IJ}$ actually is the spin-connection compatible with $e_{\mu}{}^I$: $A_{\mu IJ}=\omega_{\mu IJ}(e)$. Hence, the curvature $F_{\mu\nu IJ}$ is the Riemann tensor $R^{\mu}{}_{\nu\lambda\sigma}$ of the Levi-Civita connection, $R_{\mu\nu\lambda\sigma}=F_{\lambda\sigma IJ}e_{\mu}{}^I e_{\nu}{}^J$. Bearing in mind these facts, using again~\eqref{solB}, and working in the orthonormal basis determined by $e^{\mu}{}_I$, we can rewrite~\eqref{eqmot2} as
\begin{equation}\label{RLC}
	R^{IJ}{}_{KL}=e^{\mu}{}_K e^{\nu}{}_L F_{\mu\nu}{}^{IJ}=-2\Phi^{[M]}{}^{PQIJ}\epsilon_{[M]PQKL}, 
\end{equation} 
with $\Phi^{[M]}{}^{IJPQ}:=e_{\alpha_1}{}^{M_1 } \cdots e_{\alpha_{n-4}}{}^{M_{n-4}} \Phi^{\alpha_1 \ldots \alpha_{n-4}}{}^{IJPQ}$. From this, the Ricci tensor $R_{IJ}:=R^K{}_{IKJ}$ reads
\begin{equation}\label{Ricci}
	R_{IJ}=-2\Phi^{[M]}{}^{PQK}{}_I\epsilon_{[M]PQKJ}.
\end{equation}
Using the identity
\begin{eqnarray}
	\Phi^{[M]}{}^{PQK}{}_I\epsilon_{[M]PQKJ} = \frac{1}{4}\eta_{IJ}\Phi^{[M]PQRS}\epsilon_{[M]PQRS}- \frac{(n-4)}{4}\Phi^{\langle M\rangle}{}_I{}^{PQRS}\epsilon_{\langle M \rangle JPQRS}, 
\end{eqnarray}
we can rewrite the Ricci tensor~\eqref{Ricci} as
\begin{eqnarray}
	R_{IJ}=-\frac{1}{2}\eta_{IJ}\Phi^{[M]PQRS}\epsilon_{[M]PQRS}+\frac{(n-4)}{2}\Phi^{\langle M\rangle}{}_I{}^{PQRS}\epsilon_{\langle M\rangle JPQRS}.\label{Ricci2}
\end{eqnarray}
Finally, multiplying~\eqref{eqmot4} by the appropriate products of $e_{\alpha}{}^{M}$ to convert the spacetime indices into internal ones, we conclude that both terms on the right-hand side of~\eqref{Ricci2}  vanish, which means that $R_{IJ}=0$. These are the Einstein equations without a cosmological constant in vacuum. Therefore, the Einstein equations follow as stationary points of the action principle~\eqref{BFact1}. 

It is worth pointing out that by directly replacing~\eqref{solB} in~\eqref{BFact1}, we get the Palatini action,
\begin{equation}\label{Palatini}
	S[A,e]=\int_M d^n x \, e  \,  e^{\mu}{}_I e^{\nu}{}_J F_{\mu\nu}{}^{IJ},
\end{equation}
which provides an alternative way of establishing that~\eqref{BFact1} indeed describes general relativity.

%%%%%%%%%%%%%	
\subsection{$[(n-1)+1]$-decomposition of the action}\label{decomp}

Splitting the objects involved in the action~\eqref{BFact1} into their temporal and spatial components, we obtain
\begin{eqnarray}
	&&\hspace{-8mm}S=\int_{\mathbb{I}} dt\int_{\Sigma}d^{n-1}x\Bigl[ \tilde{\Pi}^{aIJ}F_{taIJ}+\tilde{B}^{abIJ}F_{abIJ}+\Phi^{[d]IJKL}\left(2\uac{\eta}_{tabc[d]}\tilde{\Pi}^a{}_{IJ}\tilde{B}^{bc}{}_{KL}\right.\notag\\
	&&\hspace{-8mm}\left.+\tilde{\mu}_{[d]}{}^{[M]}\epsilon_{[M]IJKL}\right)+(n-4)\Phi^{t\langle e\rangle IJKL}\left(\uac{\eta}_{t\langle e\rangle abcd}\tilde{B}^{ab}{}_{IJ}\tilde{B}^{cd}{}_{KL}+\tilde{\mu}_{t\langle e\rangle}{}^{[M]}\epsilon_{[M]IJKL}\right)\Bigr],\label{BFact2}
\end{eqnarray}
where $\tilde{\Pi}^{aIJ}:=2\tilde{B}^{taIJ}$. Notice that the last two rows of~\eqref{BFact2} are only present for $n\geq 5$ since the whole term vanishes for $n=4$. The equations of motion for $\Phi^{[d]IJKL}$ and $\Phi^{t\langle e\rangle IJKL}$ are given by
\begin{subequations}
	\begin{eqnarray}
		\uac{\eta}_{tabc[d]}(\tilde{\Pi}^a{}_{IJ}\tilde{B}^{bc}{}_{KL}+\tilde{\Pi}^a{}_{KL}\tilde{B}^{bc}{}_{IJ})+\tilde{\mu}_{[d]}{}^{[M]}\epsilon_{[M]IJKL}&=&0,\label{constr1}\\
		\uac{\eta}_{t\langle e\rangle abcd}\tilde{B}^{ab}{}_{IJ}\tilde{B}^{cd}{}_{KL}+\tilde{\mu}_{t\langle e\rangle}{}^{[M]}\epsilon_{[M]IJKL}&=&0; \label{constr2}
	\end{eqnarray}
\end{subequations}
respectively. Our task is to solve these equations for $\tilde{\Pi}^a{}_{IJ}$, $\tilde{B}^{ab}{}_{IJ}$, $\tilde{\mu}_{[d]}{}^{[M]}$, and $\tilde{\mu}_{t\langle e\rangle}{}^{[M]}$, although the expressions of the last two are fixed once the expressions for the first two are known. Notice that~\eqref{constr1} is linear in 
$\tilde{\Pi}^a{}_{IJ}$ and, independently, also linear in $\tilde{B}^{ab}{}_{IJ}$; whereas~\eqref{constr2} only depends on $\tilde{B}^{ab}{}_{IJ}$ in a quadratic fashion. As mentioned above, equation~\eqref{constr2} only exists for $n\geq 5$, so, for $n=4$ we just have to deal with~\eqref{constr1}. We will see that, in order to solve both equations~\eqref{constr1} and~\eqref{constr2}, it is enough to consider just~\eqref{constr1}, which means that the set of equations~\eqref{constr2} gives rise to reducibility conditions for the whole system of equations (for $n>4$, of course).

Multiplying~\eqref{constr1} and~\eqref{constr2} by $\epsilon^{[N]IJKL}$, we obtain
\begin{subequations}
	\begin{eqnarray}
		\tilde{\mu}_{[d]}{}^{[M]}&=&-\frac{2\sigma}{4!(n-4)!}\uac{\eta}_{tabc[d]}\epsilon^{IJKL[M]}\tilde{\Pi}^a{}_{IJ}\tilde{B}^{bc}{}_{KL},\label{mu1}\\
		\tilde{\mu}_{t\langle e\rangle}{}^{[M]}&=&-\frac{\sigma}{4!(n-4)!}\uac{\eta}_{tabcd\langle e\rangle}\epsilon^{IJKL[M]}\tilde{B}^{ab}{}_{IJ}\tilde{B}^{cd}{}_{KL},\label{mu2}
	\end{eqnarray}
\end{subequations}
which express both $\tilde{\mu}_{[d]}{}^{[M]}$ and $\tilde{\mu}_{t\langle e\rangle}{}^{[M]}$ in terms of $\tilde{\Pi}^a{}_{IJ}$ and $\tilde{B}^{ab}{}_{IJ}$. These expressions are then substituted back into~\eqref{constr1} and~\eqref{constr2}. Note that the resulting expression from~\eqref{constr2} does not involve $\tilde{\Pi}^a{}_{IJ}$.

%%%%%%%%%%%%%%%%%%%
\subsection{Solution of the constraints}

In what follows we obtain the solution for $\tilde{\Pi}^a{}_{IJ}$ and $\tilde{B}^{ab}{}_{IJ}$ using only~\eqref{constr1}.~This is a remarkable fact because~\eqref{constr1} and~\eqref{constr2} define a coupled system for these variables. See the Appendix \ref{appendix} for a different approach where the solution of~\eqref{constr1} and~\eqref{constr2} is obtained from solving  first~\eqref{constr2} and then~\eqref{constr1}. So, let us start with~\eqref{constr1}. It can be equally rewritten as
\begin{equation}
	\tilde{\Pi}^{[a}{}_{IJ}\tilde{B}^{bc]}{}_{KL}+\tilde{\Pi}^{[a}{}_{KL}\tilde{B}^{bc]}{}_{IJ}+\tilde{\tilde{\mathcal{V}}}{}^{abc}{}_{IJKL}=0,\label{constr3}
\end{equation}
where we have defined
\begin{equation}
	\tilde{\tilde{\mathcal{V}}}{}^{abc}{}_{IJKL}:=\frac{1}{3!(n-4)!}\tilde{\eta}^{tabc[d]}\tilde{\mu}_{[d]}{}^{[M]}\epsilon_{[M]IJKL}.\label{V}
\end{equation}
As a consequence of this definition, $\tilde{\tilde{\mathcal{V}}}{}^{abc}{}_{IJKL}$ is totally antisymmetric in both spacetime and internal indices separately.

Equation~\eqref{constr3} is equivalent to the following equations:
\begin{subequations}
	\begin{eqnarray}
		&&\hspace{-10mm} K=I,\ L=J,\ I\neq J:\quad\tilde{\Pi}^{[a}{}_{IJ}\tilde{B}^{bc]}{}_{IJ}=0,\label{eq1}\\
		&&\hspace{-10mm} K=I,\  I\neq J\neq L:\quad \tilde{\Pi}^{[a}{}_{IJ}\tilde{B}^{bc]}{}_{IL}+\tilde{\Pi}^{[a}{}_{IL}\tilde{B}^{bc]}{}_{IJ}=0,\label{eq2}\\
		&&\hspace{-10mm} I \neq J \neq K \neq L:\quad \tilde{\Pi}^{[a}{}_{IJ}\tilde{B}^{bc]}{}_{KL}+\tilde{\Pi}^{[a}{}_{KL}\tilde{B}^{bc]}{}_{IJ}=\tilde{\Pi}^{[a}{}_{IK}\tilde{B}^{bc]}{}_{LJ}+\tilde{\Pi}^{[a}{}_{LJ}\tilde{B}^{bc]}{}_{IK}.\label{eq3}
	\end{eqnarray}
\end{subequations}
Notice that~\eqref{eq3} is obtained after performing a cyclic permutation of $(JKL)$ in~\eqref{constr3}, which leaves the term $\tilde{\tilde{\mathcal{V}}}{}^{abc}{}_{IJKL}$ invariant. Note also that there is no sum convention in the internal indices neither in~\eqref{eq1} nor in~\eqref{eq2}.

The solution of~\eqref{eq1},~\eqref{eq2}, and~\eqref{eq3} requires the use of the following two lemmas. The reader must be aware that in such lemmas the antisymmetry of $\tilde{\Pi}^a{}_{IJ}$ and $\tilde{B}^{ab}{}_{IJ}$--displayed in SubSect.~\eqref{decomp}--in the internal indices {\it is not} used. The antisymmetry property is imposed after lemma 1 and lemma 2 are established.

\begin{lemma}\label{lemma1}
	The vector density $\tilde{\Pi}^{a}{}_{IJ}$ and the bivector density $\tilde{B}^{ab}{}_{IJ}$ satisfy~\eqref{eq1} iff $\tilde{B}^{ab}{}_{IJ}=2\tilde{\Pi}^{[a}{}_{IJ} v^{b]}{}_{IJ}$, where $v^a{}_{IJ}$ is an arbitrary vector for any independent pair of indices $IJ$.
	
\end{lemma}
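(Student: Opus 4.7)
The ``if'' direction reduces to a one-line check: substituting $\tilde{B}^{ab}{}_{IJ}=2\tilde{\Pi}^{[a}{}_{IJ}v^{b]}{}_{IJ}$ into the left-hand side of~\eqref{eq1} produces
\[
\tilde{\Pi}^{[a}{}_{IJ}\tilde{B}^{bc]}{}_{IJ}=2\tilde{\Pi}^{[a}{}_{IJ}\tilde{\Pi}^{b}{}_{IJ}v^{c]}{}_{IJ},
\]
which vanishes because $\tilde{\Pi}^a{}_{IJ}\tilde{\Pi}^b{}_{IJ}$ (no sum on $I,J$) is symmetric in the antisymmetrized indices $a,b$.

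For the converse, the plan is to fix a pair $(I,J)$ with $I\neq J$, temporarily suppress the internal indices, and use the antisymmetry of $\tilde{B}^{ab}$ in $a,b$ to rewrite~\eqref{eq1} as the cyclic Plücker-type identity
\[
\tilde{\Pi}^a\tilde{B}^{bc}+\tilde{\Pi}^b\tilde{B}^{ca}+\tilde{\Pi}^c\tilde{B}^{ab}=0.
\]
This is a Cartan-lemma statement in disguise: a spatial $2$-form wedge-annihilated by a nonzero $1$-form must be that $1$-form wedged with some vector. To make it explicit, I would choose a spatial frame adapted to $\tilde{\Pi}$ in which $\tilde{\Pi}^a=\tilde{\pi}\,\delta^a_{\,1}$ with $\tilde{\pi}\neq 0$. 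Setting $a=1$ in the cyclic identity forces $\tilde{B}^{bc}=0$ whenever $b,c\neq 1$, so only the components $\tilde{B}^{1c}$ can be nonzero. Defining $v^1{}_{IJ}:=0$ and $v^c{}_{IJ}:=\tilde{B}^{1c}{}_{IJ}/\tilde{\pi}$ for $c\neq 1$ (and restoring internal indices), a direct verification gives $\tilde{B}^{ab}{}_{IJ}=2\tilde{\Pi}^{[a}{}_{IJ}v^{b]}{}_{IJ}$. The component of $v^a{}_{IJ}$ parallel to $\tilde{\Pi}^a{}_{IJ}$ never enters this parametrization, which accounts for the ``arbitrary'' qualifier in the statement.

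\textbf{Main obstacle.} The only real subtlety is the degenerate case in which $\tilde{\Pi}^a{}_{IJ}=0$ for some pair $(I,J)$: equation~\eqref{eq1} imposes nothing but the ansatz would demand $\tilde{B}^{ab}{}_{IJ}=0$. I expect this to be sidestepped by the nondegeneracy of $\tilde{\Pi}^{aIJ}$ that is enforced elsewhere in the canonical analysis. Apart from this edge case the argument is routine linear algebra, so the write-up would mostly consist of carefully expanding the antisymmetrizations and justifying the adapted-frame step.
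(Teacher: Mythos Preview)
Your proposal is correct and follows essentially the same route as the paper: both rewrite~\eqref{eq1} as the cyclic identity $\tilde{\Pi}^{a}\tilde{B}^{bc}+\tilde{\Pi}^{b}\tilde{B}^{ca}+\tilde{\Pi}^{c}\tilde{B}^{ab}=0$ and then contract once to solve for $\tilde{B}$. The only cosmetic difference is that the paper contracts with an arbitrary 1-form $\chi_c$ satisfying $\tilde{\Pi}^{c}\chi_c\neq 0$ and reads off $v^{a}{}_{IJ}=-\tilde{B}^{ab}{}_{IJ}\chi_b/(\tilde{\Pi}^{c}{}_{IJ}\chi_c)$ directly, whereas your adapted-frame step amounts to the particular choice $\chi_c=\delta^{1}_{c}$; the paper likewise tacitly assumes $\tilde{\Pi}^{a}{}_{IJ}\neq 0$, so your flagged edge case is treated no differently there.
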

\begin{proof}
	That $\tilde{B}^{ab}{}_{IJ} = 2\tilde{\Pi}^{[a}{}_{IJ} v^{b]}{}_{IJ}$ is a sufficient condition for~\eqref{eq1} follows immediately from the repetition of the factor $\tilde{\Pi}^{a}{}_{IJ}$ when this condition is substituted into~\eqref{eq1}. To see that  $\tilde{B}^{ab}{}_{IJ} = 2\tilde{\Pi}^{[a}{}_{IJ} v^{b]}{}_{IJ}$ is also a necessary condition for~\eqref{eq1}, notice that we can rewrite~\eqref{eq1} as
	\begin{equation}
		\tilde{\Pi}^{a}{}_{IJ} \tilde{B}^{bc}{}_{IJ} + \tilde{\Pi}^{b}{}_{IJ} \tilde{B}^{ca}{}_{IJ} + \tilde{\Pi}^{c}{}_{IJ} \tilde{B}^{ab}{}_{IJ} =0.
	\end{equation}
	Multiplying this expression by a nonvanishing 1-form $\chi_c$ such that $\tilde{\Pi}^{c}{}_{IJ} \chi_c\neq 0$, we obtain
	\begin{equation}\label{BPiv}
		\tilde{B}^{ab}{}_{IJ} = 2 \tilde{\Pi}^{[a}{}_{IJ} v^{b]}{}_{IJ},
	\end{equation}
	where we have defined $v^a{}_{IJ}:=- ( \tilde{B}{}^{ab}{}_{IJ} \chi_b ) / (\tilde{\Pi}^{c}{}_{IJ} \chi_c) $.
\end{proof}

Once we have solved~\eqref{eq1}, we now focus on~\eqref{eq2}. Because $I$ is fixed and $I\neq J\neq L$ in~\eqref{eq2}, such a condition indicates how two pairs $(\tilde{\Pi}_1^{a},\tilde{B}_1^{ab})$ and $(\tilde{\Pi}_2^{a},\tilde{B}_2^{ab})$, where we have identified $IJ$ with 1 and $IL$ with 2, are related to one another. In fact, assuming they both independently satisfy~\eqref{eq1}, we have, because of Lemma~\ref{lemma1}, that $\tilde{B}^{ab}_i=2\tilde{\Pi}^{[a}_iv^{b]}_i$ ($i=1,2$), where $v_1\neq v_2$ in general. Rewriting~\eqref{eq2} as
\begin{equation}\label{Pi1Pi2}
	\tilde{\Pi}^{[a}_1\tilde{B}^{bc]}_2+\tilde{\Pi}^{[a}_2\tilde{B}^{bc]}_1=0,
\end{equation}
we have the following result:
\begin{lemma}\label{lemma2}
	Suppose that $(\tilde{\Pi}_i^{a},\tilde{B}_i^{ab})$ ($i=1,2$) satisfy Lemma~\ref{lemma1}. $(\tilde{\Pi}_i^{a},\tilde{B}_i^{ab})$ ($i=1,2$) fulfill~\eqref{Pi1Pi2} iff  $\tilde{B}^{ab}_i=2\tilde{\Pi}^{[a}_i w^{b]}$ ($i=1,2$) for some arbitrary vector $w$. Thus, $\tilde{B}_1$ and $\tilde{B}_2$ share the same vector $w$.
\end{lemma}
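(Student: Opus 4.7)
The plan is to treat sufficiency as a direct substitution and to handle necessity via a wedge-product argument on the difference $v_2-v_1$.

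For sufficiency, I would assume $\tilde{B}_i^{ab}=2\tilde{\Pi}_i^{[a}w^{b]}$ with a common $w$. Using antisymmetry in $bc$ to absorb the inner bracket into the outer one yields $\tilde{\Pi}_1^{[a}\tilde{B}_2^{bc]}=2\tilde{\Pi}_1^{[a}\tilde{\Pi}_2^{b}w^{c]}$ and $\tilde{\Pi}_2^{[a}\tilde{B}_1^{bc]}=2\tilde{\Pi}_2^{[a}\tilde{\Pi}_1^{b}w^{c]}$. In the second expression, transposing the factors $\tilde{\Pi}_1$ and $\tilde{\Pi}_2$ is equivalent (after relabeling) to swapping the indices $a$ and $b$ in a totally antisymmetric expression, which flips the sign; the two terms therefore cancel, producing~\eqref{Pi1Pi2}.

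For necessity, I would first invoke Lemma~\ref{lemma1} to write $\tilde{B}_i^{ab}=2\tilde{\Pi}_i^{[a}v_i^{b]}$ with a priori distinct vectors $v_1,v_2$. Plugging these into~\eqref{Pi1Pi2} and repeating the sign manipulation used above reduces the equation to
\begin{equation}
\tilde{\Pi}_1^{[a}\tilde{\Pi}_2^{b}(v_2-v_1)^{c]}=0.
\end{equation}
I would read the left-hand side as the wedge $\tilde{\Pi}_1\wedge\tilde{\Pi}_2\wedge(v_2-v_1)$ on $\Sigma$. In the generic situation where $\tilde{\Pi}_1$ and $\tilde{\Pi}_2$ are linearly independent as spatial vector densities (the regime pertinent to a non-degenerate frame), the vanishing of this $3$-form forces $v_2-v_1=\alpha\tilde{\Pi}_1+\beta\tilde{\Pi}_2$ for some scalar coefficients $\alpha,\beta$ of the appropriate density weight.

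Defining $w:=v_1+\alpha\tilde{\Pi}_1=v_2-\beta\tilde{\Pi}_2$ and using the identity $\tilde{\Pi}_i^{[a}\tilde{\Pi}_i^{b]}=0$, I would verify that the shifts by $\alpha\tilde{\Pi}_1$ and $-\beta\tilde{\Pi}_2$ leave $\tilde{B}_1$ and $\tilde{B}_2$ respectively unchanged, so that $\tilde{B}_i^{ab}=2\tilde{\Pi}_i^{[a}w^{b]}$ with the claimed common $w$. The main obstacle I foresee is the degenerate branch $\tilde{\Pi}_2\propto\tilde{\Pi}_1$, for which~\eqref{Pi1Pi2} holds automatically (both summands vanish) yet the conclusion can fail unless something extra ties $\tilde{B}_1$ and $\tilde{B}_2$ together; I would expect the argument to tacitly rely on the spatial non-degeneracy assumption inherited from the canonical split of a non-degenerate $\tilde{B}$ to exclude this branch.
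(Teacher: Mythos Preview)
Your proposal is correct and follows essentially the same route as the paper: both establish sufficiency by direct substitution, then for necessity invoke Lemma~\ref{lemma1} to write $\tilde{B}_i^{ab}=2\tilde{\Pi}_i^{[a}v_i^{b]}$, reduce~\eqref{Pi1Pi2} to $\tilde{\Pi}_1^{[a}\tilde{\Pi}_2^{b}(v_2-v_1)^{c]}=0$, infer $v_2-v_1=\uac{\alpha}\tilde{\Pi}_1+\uac{\beta}\tilde{\Pi}_2$ from generic independence of $\tilde{\Pi}_1,\tilde{\Pi}_2$, and set $w:=v_1+\uac{\alpha}\tilde{\Pi}_1$. Your explicit remark on the degenerate branch $\tilde{\Pi}_2\propto\tilde{\Pi}_1$ is a welcome addition that the paper leaves implicit under ``$\tilde{\Pi}_1\neq\tilde{\Pi}_2$ in general.''
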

\begin{proof}
	That $\tilde{B}^{ab}_i=2\tilde{\Pi}^{[a}_i w^{b]}$ ($i=1,2$) is a sufficient condition for~\eqref{Pi1Pi2} follows from their direct subtitution in~\eqref{Pi1Pi2}. On the other side, as mentioned above, since $(\tilde{\Pi}_i^{a},\tilde{B}_i^{ab})$ ($i=1,2$) fulfill Lemma~\eqref{lemma1}, we have $\tilde{B}^{ab}_i=2\tilde{\Pi}^{[a}_iv^{b]}_i$ ($i=1,2$). Replacing them into~\eqref{Pi1Pi2}, we obtain
	\begin{equation}
		\tilde{\Pi}^{[a}_1\tilde{\Pi}^{b}_2(v^{c]}_2-v^{c]}_1)=0.
	\end{equation}
	Since $\tilde{\Pi}_1\neq \tilde{\Pi}_2$ in general, then last equation implies that $v^{c}_2=v^{c}_1+ \uac{\alpha} \tilde{\Pi}^{c}_1+\uac{\beta} \tilde{\Pi}^{c}_2$, where $\uac{\alpha}$ and $ \uac{\beta}$ are arbitrary real-valued functions of weight $-1$. Substituting this expression into $\tilde{B}_2$ we obtain
	\begin{equation}
		\tilde{B}^{ab}_2=2\tilde{\Pi}^{[a}_2(v^{b]}_1+\uac{\alpha}\tilde{\Pi}^{b]}_1).
	\end{equation}
	Defining $w:=v_1+\uac{\alpha}\tilde{\Pi}_1$, we have $\tilde{B}^{ab}_1=2\tilde{\Pi}^{[a}_1 w^{b]}$ and $\tilde{B}^{ab}_2=2\tilde{\Pi}^{[a}_2 w^{b]}$, i.e.,  $\tilde{B}^{ab}_i=2\tilde{\Pi}^{[a}_i w^{b]}$ ($i=1,2$).
\end{proof}

For fixed $I$, let us consider a third element $(\tilde{\Pi}_3^{a},\tilde{B}_3^{ab})$ that satisfies both~\eqref{eq1} and~\eqref{eq2}. Because of Lemma~\ref{lemma1}, we have that $\tilde{B}^{ab}_3=2\tilde{\Pi}^{[a}_3 v_3^{b]}$ for some vector $v_3$. Now, because of Lemma~\ref{lemma2}, $\tilde{B}^{ab}_3$ shares a common vector with both $\tilde{B}^{ab}_1$ and $\tilde{B}^{ab}_2$ (which are not necessarily equal). Since $\tilde{\Pi}_i$ ($i=1,2,3$) are assumed to be independent, this means that we must have $v_3=w$. Therefore, $\tilde{B}^{ab}_i=2\tilde{\Pi}^{[a}_i w^{b]}$ ($i=1,2,3$). This procedure can be continued until we have covered all the possible values of $J$ and $L$ for fixed $I$ in~\eqref{eq2}, so that $\tilde{B}^{ab}_i=2\tilde{\Pi}^{[a}_i w^{b]}$ ($i=1,2,\ldots$). For instance, for $I=0$ and $J\neq0$ the previous result implies that $\tilde{B}^{ab}{}_{0J}=2\tilde{\Pi}^{[a}{}_{0J} w^{b]}{}_0$ for some common vector $w_0$; for $I=1$ and $J\neq1$ we have $\tilde{B}^{ab}{}_{1J}=2\tilde{\Pi}^{[a}{}_{1J} w^{b]}{}_1$ for some common vector $w_1$; and so on. In consequence, equations~\eqref{eq1} and~\eqref{eq2} allow us to conclude that
\begin{equation}\label{BIJ}
	\tilde{B}^{ab}{}_{IJ}=2\tilde{\Pi}^{[a}{}_{IJ}w^{b]}{}_I,
\end{equation}
where $w_I$ is the common vector shared by all the $\tilde{B}^{ab}{}_{IJ}$ with fixed $I$ and $J\neq I$.

We now impose antisymmetry in the internal indices. Equation~\eqref{BIJ} implies
\begin{equation}\label{BIJ2}
	\tilde{B}^{ab}{}_{JI}=2\tilde{\Pi}^{[a}{}_{JI}w^{b]}{}_J=-2\tilde{\Pi}^{[a}{}_{IJ}w^{b]}{}_J.
\end{equation}
Since $\tilde{B}^{ab}{}_{JI}=-\tilde{B}^{ab}{}_{IJ}$,~\eqref{BIJ} and~\eqref{BIJ2} imply, for fixed $I$ and $J$, the relation
\begin{equation}
	\tilde{\Pi}^{[a}{}_{IJ}(w^{b]}{}_I-w^{b]}{}_J)=0,
\end{equation}
which means that $\tilde{\Pi}^{a}{}_{IJ}$, as the components of a spatial vector, are proportional to the difference $w^a{}_I-w^a{}_J$. Therefore, we must have
\begin{equation}\label{PiIJ}
	\tilde{\Pi}^{a}{}_{IJ}=\tilde{\theta}_{IJ}(w^a{}_I-w^a{}_J),
\end{equation}
for some real-valued functions $\tilde{\theta}_{IJ}$ of weight $+1$. Given that $\tilde{\Pi}^{a}{}_{IJ}=-\tilde{\Pi}^{a}{}_{JI}$, $\tilde{\theta}_{IJ}$ is symmetric in $IJ$, $\tilde{\theta}_{IJ}=\tilde{\theta}_{JI}$.

Substituting~\eqref{PiIJ} in~\eqref{BIJ} yields
\begin{equation}\label{BIJ3}
	\tilde{B}^{ab}{}_{IJ}=2\tilde{\theta}_{IJ}w^{[a}{}_Iw^{b]}{}_J.
\end{equation}
To fix $\tilde{\theta}_{IJ}$ we use the remaining equation~\eqref{eq3}. Replacing~\eqref{PiIJ} and~\eqref{BIJ3} in~\eqref{eq3} we get
\begin{eqnarray}
	&&\left(\tilde{\theta}_{IJ}\tilde{\theta}_{KL}-\tilde{\theta}_{IK}\tilde{\theta}_{JL}\right)\left(w^{[a}{}_Iw^{b}{}_Kw^{c]}{}_L-w^{[a}{}_Jw^{b}{}_Kw^{c]}{}_L+w^{[a}{}_Kw^{b}{}_Iw^{c]}{}_J-w^{[a}{}_Lw^{b}{}_Iw^{c]}{}_J\right)=0.\notag\\
\end{eqnarray}
Since the second factor is nonvanishing in general (otherwise we would have relations among the different $w_I$'s), we conclude that
\begin{equation}\label{theta}
	\tilde{\theta}_{IJ}\tilde{\theta}_{KL}=\tilde{\theta}_{IK}\tilde{\theta}_{JL}.
\end{equation}
Although $I\neq J\neq K\neq L$ in this equation, we may assume that it holds even if some of the indices are repeated because those cases will not contribute to the final expressions for $\tilde{\Pi}^{a}{}_{IJ}$ and $\tilde{B}^{ab}{}_{IJ}$.

Since $\tilde{\theta}_{IJ}$ is a symmetric matrix, it can be (orthogonally) diagonalized, which means that it has at least one nonvanishing eigenvector $\lambda^I$ associated to some given nonvanishing eigenvalue whose value is not important for our purpose. This means that there exists $\lambda^I$ such that $\tilde{\theta}_{IJ}\lambda^I\lambda^J\neq 0$. Multiplying~\eqref{theta} by $\lambda^K\lambda^L$ we obtain
\begin{equation}\label{theta1}
	\tilde{\theta}_{IJ}=\epsilon M_IM_J,
\end{equation}
where $\epsilon:=\text{sgn}(\tilde{\theta}_{IJ}\lambda^I\lambda^J)$ and $M_I:=\tilde{\theta}_{IJ}\lambda^J/|\tilde{\theta}_{KL}\lambda^K\lambda^L|^{1/2}$ has weight $+1/2$. Inserting~\eqref{theta1} into~\eqref{PiIJ} and ~\eqref{BIJ3}, we arrive at
\begin{subequations}
	\begin{eqnarray}
		\tilde{\Pi}^{a}{}_{IJ}&=&2\epsilon V^a{}_{[I}M_{J]},\label{SolPi}\\
		 \tilde{B}^{ab}{}_{IJ}&=&2\epsilon V^{[a}{}_{I}V^{b]}{}_{J},\label{BIJ4}
	\end{eqnarray}
\end{subequations}
where $V^a{}_{I}:=M_Iw^a{}_{I}$ has weight $+1/2$, i.e., we have rescaled $w^a{}_{I}$. Notice that in order to obtain this solution we did not use~\eqref{constr2}, which means that these constraints give rise to reducibility conditions for the whole system of equations. Actually,~\eqref{BIJ4} corresponds to the solution of~\eqref{constr2}, so that it is automatically satisfied [for the value of $\tilde{\mu}_{t\langle e\rangle}{}^{[M]}$ given by~\eqref{mu2}]. We point out that the sign $\epsilon$ appearing on the right-hand side of~\eqref{SolPi} and~\eqref{BIJ4} reflects the quadratic nature of the constraints~\eqref{eqmot3}. In~\eqref{SolPi}, this sign can be eliminated by absorbing it in either $V^a{}_I$ or $M_I$ (after all, they are arbitrary); on the other hand, since~\eqref{BIJ4} is quadratic in $V^a{}_I$, that sign cannot be eliminated from it.

Defining the norm of $M^I$ as $|M|:=|M_I M^I|^{1/2}$ (of weight $+1/2$), let us introduce the unit vector $m^I$ along $M^I$:
\begin{equation}\label{mI}
	m^I:=\frac{1}{|M|}M^I,
\end{equation}
which satisfies $m_Im^I=\sigma$ for $\sigma:
=\text{sgn} (M_I M^I)$. Although in principle they do not need to be related, we choose this value to be equal to the metric component $\eta_{00}$ for consistency. Multiplying~\eqref{SolPi} by $m^J$ and defining $\tilde{\Pi}^a{}_{I}:=-\sigma\tilde{\Pi}^{a}{}_{IJ}m^J$, we find
\begin{equation}\label{PiV}
	\tilde{\Pi}^a{}_{I}=\epsilon|M|\left(-V^a{}_I+\sigma V^a{}_Jm^Jm_I\right).
\end{equation}
Note that the orthogonality condition $\tilde{\Pi}^a{}_{I}m^I=0$ is fulfilled. In addition, the sign $\epsilon$ on the right of~\eqref{SolPi} is absorbed into the definition of $\tilde{\Pi}^a{}_{I}$. We can rewrite~\eqref{PiV} as
\begin{equation}\label{VaI}
	V^a{}_I= -\epsilon |M|^{-1} \tilde{\Pi}^a{}_{I}+\sigma V^a{}_Jm^Jm_I,
\end{equation}
which provides a decomposition of $V^a{}_I$ along $\tilde{\Pi}^a{}_{I}$ and $m^I$. Replacing~\eqref{VaI} in~\eqref{SolPi} gives
\begin{equation}\label{SolPi2}
	\tilde{\Pi}^{a}{}_{IJ}=-2\tilde{\Pi}^a{}_{[I}m_{J]}.
\end{equation}
On the other hand, inserting~\eqref{VaI} in~\eqref{BIJ4} leads to
\begin{eqnarray}
	\tilde{B}^{ab}{}_{IJ} &=& 2\epsilon |M|^{-2}\tilde{\Pi}^{[a}{}_{I}\tilde{\Pi}^{b]}{}_{J} +2 \sigma|M|^{-1} m^K V^{[a}{}_K(\tilde{\Pi}^{b]}{}_{I}m_{J}-\tilde{\Pi}^{b]}{}_{J}m_{I}).\label{BIJ5}
\end{eqnarray}
Now define $\uac{N}:=2\epsilon\sigma |M|^{-2}$ and $N^a:=2\sigma|M|^{-1} V^{a}{}_Im^I$. With this,~\eqref{BIJ5} reads
\begin{equation}\label{BIJ6}
	\tilde{B}^{ab}{}_{IJ}=N^{[a}(\tilde{\Pi}^{b]}{}_{I}m_{J}-\tilde{\Pi}^{b]}{}_{J}m_{I})+\sigma \uac{N} \tilde{\Pi}^{[a}{}_{I}\tilde{\Pi}^{b]}{}_{J}.
\end{equation}
Equations~\eqref{SolPi2} and~\eqref{BIJ6} are the desired solutions for $\tilde{\Pi}^{a}{}_{IJ}$ and $\tilde{B}^{ab}{}_{IJ}$ in terms of the $n^2$ objects $\tilde{\Pi}^{a}{}_{I}$, $\uac{N}$, and $N^a$. It is worth pointing out that $m^I$ is not an independent variable since the orthogonaly condition and its unit character imply that it can be expressed entirely in terms of $\tilde{\Pi}^{a}{}_{I}$ (up to a sign) as
\begin{equation}
	\label{mPi}
	m_{I}=\frac{1}{(n-1)! \sqrt{h}}\epsilon_{IJ_1 \ldots J_{n-1}}\underaccent{\tilde}{\eta}_{t a_1 \ldots a_{n-1}} \tilde{\Pi}^{a_1 J_1}  \cdots \tilde{\Pi}^{a_{n-1} J_{n-1}},
\end{equation}
where $h:=\mathrm{det}(\tilde{\tilde{h}}{}^{ab})$, for $\tilde{\tilde{h}}{}^{ab}:=\tilde{\Pi}^{a I}\tilde{\Pi}^{b}{}_{I}$, is positive definite and has weight $2(n-2)$. This latter object allows us to bring in the spatial metric on $\Sigma$, which can be defined as $q_{ab}:=h^{\frac{1}{n-2}}\uac{\uac{h}}{}_{ab}$ with $\uac{\uac{h}}{}_{ab}$ the inverse of $\tilde{\tilde{h}}{}^{ab}$~\cite{PalatininD}.  

To close this section, let us remark again that the equations~\eqref{constr2} were not involved to find the desired solutions~\eqref{SolPi2} and~\eqref{BIJ6}. This means that their only role is just to fix the multiplier $\tilde{\mu}_{t\langle e\rangle}{}^{[M]}$. Althought they are not necessary, we give for the sake of completeness the ensuing solutions for the Lagrange multipliers $\tilde{\mu}_{[d]}{}^{[M]}$ and $\tilde{\mu}_{t\langle e\rangle}{}^{[M]}$. Replacing~\eqref{SolPi2} and~\eqref{BIJ6} in~\eqref{mu1} and~\eqref{mu2}, we get, after making use of~\eqref{mPi},
\begin{subequations}
	\begin{eqnarray}
		\hspace{-8mm}\tilde{\mu}_{[d]}{}^{[M]}&=&-\sigma \sqrt{h}\uac{N}\uac{\uac{h}}{}_{[d_1|c_1}\uac{\uac{h}}{}_{|d_2|c_2}\cdots\uac{\uac{h}}{}_{|d_{n-4}]c_{n-4}}\tilde{\Pi}^{c_1M_1}\tilde{\Pi}^{c_2M_2} \cdots \tilde{\Pi}^{c_{n-4}M_{n-4}},\label{mu3}\\
		\hspace{-8mm}\tilde{\mu}_{t\langle e\rangle}{}^{[M]}&=&N^a\tilde{\mu}_{a\langle e\rangle}{}^{[M]}-\sigma \sqrt{h}\uac{N}^2\uac{\uac{h}}{}_{e_1a_1} \cdots\uac{\uac{h}}{}_{e_{n-5}a_{n-5}} m^{[M_1|}\tilde{\Pi}^{a_1|M_2|} \cdots \tilde{\Pi}^{a_{n-5}|M_{n-4}]}.\label{mu4}
	\end{eqnarray}
\end{subequations}
Note that the sets of totally antisymmetric indices have been explicitly displayed on the right-hand side of~\eqref{mu3} and~\eqref{mu4} because such indices do not belong to the same object. This notation will we used from now on throughout the paper.

As expressed above, the solution of~\eqref{constr1} and~\eqref{constr2} involves $n^2$ free objects $\tilde{\Pi}^{a}{}_{I}$, $\uac{N}$, and $N^a$ [or $V^a{}_I$ and $M_I$ according to~\eqref{SolPi} and~\eqref{BIJ4}]. However, it can be verified that for $n>4$ the number of equations in~\eqref{constr1} and~\eqref{constr2} surpasses the number of independent unknowns involved (they are exactly equal in $n=4$). This means that not all the equations~\eqref{constr1} and~\eqref{constr2} are independent (so that the system is reducible), and indeed we found that~\eqref{constr2} was not necessary to find the solution for $\tilde{\Pi}^{aIJ}$ and $\tilde{B}^{abIJ}$: once ~\eqref{constr1} was solved, the only purpose of~\eqref{constr2} was to fix $\tilde{\mu}_{t\langle e\rangle}{}^{[M]}$. To quantify the number of reducibility relations between~\eqref{constr1} and~\eqref{constr2}, we know that the number of free variables ($FV$) at the end must be equal to the difference between the number of independent unknowns ($IU$, including $\tilde{\Pi}^{aIJ}$, $\tilde{B}^{abIJ}$, $\tilde{\mu}_{[d]}{}^{[M]}$, and $\tilde{\mu}_{t\langle e\rangle}{}^{[M]}$) and the number of independent equations. In turn, the number of independent equations is equal to the number of initial equations ($IE$)~\eqref{constr1} and~\eqref{constr2} minus the number of reducibility relations ($RR$) among them. In consequence, we obtain that $RR=FV+IE-IU$. Since $FV=n^2$, $IE=\frac{1}{2}{n\choose 4}{n\choose 2}\left[{n\choose 2}+1\right]$ and $IU={n\choose 2}^2+{n\choose 4}^2$, we obtain $RR=\frac{1}{288}(n-4)(n-3)n^2(n+1)(n+2)(n^2-2n+9)$. Hence, the number of independent equation is given by $\frac{1}{576}(n-3)n^2(n^5-9n^4+31n^3-51n^2+184n+132)$. Notice that the number of reducibility relations vanishes for $n=4$, as expected. The fact that there is no need to know the explicit form of the reducibility relations to obtain the above solutions for the involved variables in the general case is quite remarkable. More details on the issue of reducibility in the Lagrangian framework can be found in Ref.~\cite{Freid_Puz}.

\subsection{Back to the action}

Substituting~\eqref{SolPi2} and~\eqref{BIJ6} in the action~\eqref{BFact2}, we obtain, after integrating by parts the first term and neglecting boundary terms,
\begin{eqnarray}\label{action_Pi}
	S &=&\int_{\mathbb{I}} dt\int_{\Sigma}d^{n-1}x \left ( -2\tilde{\Pi}^{a I}m^{J}\partial_{t}A_{a I J} + A_{t I J}\tilde{\mathcal{G}}^{I J} - N^{a}\tilde{\mathcal{V}}_{a} - \underaccent{\tilde}{N} \tilde{\tilde{\mathcal{C}}} \right ),
\end{eqnarray}
with 
\begin{subequations}
	\begin{eqnarray}
		\label{gauss} \tilde{\mathcal{G}}^{IJ}&:=&-2D_a\bigl(\tilde{\Pi}^{a[I}m^{J]}\bigr),\\
		\label{vector} \tilde{\mathcal{V}}_{a}&:=& -2\tilde{\Pi}^{bI}m^{J} F_{a b I J}, \\
		\label{scalar} \tilde{\tilde{\mathcal{C}}} &:=& -\sigma \tilde{\Pi}^{a I}\tilde{\Pi}^{b J}F_{a b I J}.
	\end{eqnarray}
\end{subequations}
The action~\eqref{action_Pi} coincides, up to a constant global factor, with the expression (11) of Ref.~\cite{PalatininD} for a vanishing cosmological constant, which corresponds to an intermediate step in the canonical analysis of the $n$-dimensional Palatini action. From this point on and in order to avoid the introduction of second-class constraints, the canonical analysis continues as in Ref.~\cite{PalatininD}. Although all the details can be found there, let us briefly summarize the ongoing procedure. Since the amount of variables in the connection $A_{aIJ}$ is greater than the number of variables in $\tilde{\Pi}^{aI}$, the theory is endowed with a presymplectic structure whose null directions must be isolated in order to have a well-defined symplectic structure. This is achieved by reparametrizing the connection in terms of an equivalent set of variables $(\mathcal{Q}_{aI},\uac{\uac{\lambda}}_{abc})$ as $A_{aIJ} = M_{a}{}^{b}{}_{IJK} \mathcal{Q}_{b}{}^{K} +  \tilde{\tilde{N}}_a{}^{bcd}{}_{IJ}\uac{\uac{\lambda}}_{bcd}$, where the objects $M_{a}{}^{b}{}_{IJK}$ and $\tilde{\tilde{N}}_a{}^{bcd}{}_{IJ}$ are constructed solely from $\tilde{\Pi}^{aI}$. It turns out that the variables $({\cal{Q}}_{aI},\tilde{\Pi}^{aI})$ label the points of the kinematic phase space of the theory, whereas the variables $\uac{\uac{\lambda}}_{abc}$ play the role of auxiliary variables that can be eliminated from the action via their own dynamics. The final result is that the action takes the form
\begin{eqnarray}\label{final_action}
	S &=&\int_{\mathbb{I}} dt\int_{\Sigma}d^{n-1}x\left ( 2\tilde{\Pi}^{a I}\partial_{t} {\cal{Q}}_{aI} -\Lambda_{IJ} \tilde{\mathcal{G}}^{I J} -2N^{a}\tilde{\mathcal{D}}_{a} - \underaccent{\tilde}{N} \tilde{\tilde{\mathcal{H}}} \right ),
\end{eqnarray}
where $\tilde{\mathcal{G}}^{I J}$, $\tilde{\mathcal{D}}_{a}$, and $\tilde{\tilde{\mathcal{H}}}$ are, correspondingly, the Gauss, diffeomorphism, and scalar constraints. These are the only constraints of the theory which, being first class, generate the two underlying gauge symmetries of first-order gravity: local $SO(n-1,1)$ or $SO(n)$ transformations and spacetime diffeomorphisms. In terms of the canonical variables, they read
\begin{subequations}
	\begin{eqnarray}
		\label{Gauss2} \tilde{\mathcal{G}}^{IJ} &=&  2 \tilde{\Pi}^{a [I} \mathcal{Q}_{a}{}^{J]} + 4 \delta^{I}_{[K}\delta^{J}_{L]} \tilde{\Pi}^{a [K} n^{M]} \Gamma_{a}{}^{L}{}_{M}, \\
		\label{diffeomorphism} \tilde{\mathcal{D}}_{a} &=& 2\tilde{\Pi}^{b I} \partial_{[a} \mathcal{Q}_{b] I} - \mathcal{Q}_{a}{}^{I}\partial_{b}\tilde{\Pi}^{b}{}_{I}, \\ 
		\tilde{\tilde{\mathcal{H}}} &:=& - \sigma \tilde{\Pi}^{a I}\tilde{\Pi}^{b J}R_{a b I J} +  2 \tilde{\Pi}^{a [I}\tilde{\Pi}^{|b|J]} \left ( \mathcal{Q}_{aI} \mathcal{Q}_{bJ} +2 \mathcal{Q}_{aI} \Gamma_{bJK} n^{K} + \Gamma_{aIK} \Gamma_{bJL} n^{K} n^{L} \right ),\nonumber\\
		\label{scalar2}
	\end{eqnarray}
\end{subequations}
with $\Gamma_{aIJ}$ being the $SO(n-1,1)$ or $SO(n)$ connection compatible with $\tilde{\Pi}^{aI}$, i.e., $\nabla_{a}\tilde{\Pi}^{b I}= \partial_{a}\tilde{\Pi}^{b I} + \Gamma_{a}{}^{I}{}_{J}\tilde{\Pi}^{b J} + \Gamma^{b}{}_{a c}\tilde{\Pi}^{c I} - \Gamma^{c}{}_{a c}\tilde{\Pi}^{b I} =0$, and $R_{abIJ}:= \partial_{a}\Gamma_{bIJ} - \partial_{b}\Gamma_{aIJ} + \Gamma_{aIK} \Gamma_b{}^K{}_J - \Gamma_{bIK} \Gamma_a{}^K{}_J$ being its curvature.

\section{Freidel-Krasnov-Puzio model}

\subsection{The action}

The original action of Ref.~\cite{Freid_Puz} has the same structure of~\eqref{BFact1}, but now the antisymetrization in the internal indices of the product of $B$'s is regarded. It is given by
	\begin{eqnarray}
		S[A,\tilde{B},\uac{\Phi},\tilde{\nu}]=&&\int_M d^nx\left(\tilde{B}^{\mu\nu IJ}F_{\mu\nu IJ}+\uac{\Phi}_{\mu\nu\lambda\sigma[M]}\epsilon^{[M]IJKL}\tilde{B}^{\mu\nu}{}_{IJ}\tilde{B}^{\lambda\rho}{}_{KL}\right.\notag\\
		&&\left.+\tilde{\nu}_{[\alpha]}{}^{[M]}\tilde{\eta}^{[\alpha]\mu\nu\lambda\sigma}\uac{\Phi}_{\mu\nu\lambda\sigma[M]}\right),\label{BFact3}
	\end{eqnarray}
where both $\uac{\Phi}_{\mu\nu\lambda\sigma[M]}$ (of weight $-1$) and $\tilde{\nu}_{[\alpha]}{}^{[M]}$ (of weight $1$) are Lagrange multipliers, the former having the following symmetries in the spacetime indices: $\uac{\Phi}_{\mu\nu\lambda\sigma[M]}=-\uac{\Phi}_{\nu\mu\lambda\sigma[M]}=-\uac{\Phi}_{\mu\nu\sigma\lambda[M]}=\uac{\Phi}_{\lambda\sigma\mu\nu[M]}$. Consequently, $\uac{\Phi}_{\mu\nu\lambda\sigma[M]}$ has the same symmetries in the spacetime indices as $\Phi_{[\alpha]IJKL}$ of Sec.~\ref{secBF1} does in the internal indices. It can be shown, in a completely analogous fashion as we did for the action~\eqref{BFact1}, that the equations of motion arising from~\eqref{BFact3} also lead to the Einstein equations without a cosmological constant.

\subsection{$[(n-1)+1]$-decomposition}

Splitting space and time in~\eqref{BFact3} yields

\begin{eqnarray}
	S=&&\int_{\mathbb{I}} dt\int_{\Sigma}d^{n-1}x\Bigl[ \tilde{\Pi}^{aIJ}F_{taIJ}+\tilde{B}^{abIJ}F_{abIJ}+\uac{\Phi}_{ta tb[M]}\epsilon^{[M]IJKL}\tilde{\Pi}^a{}_{IJ}\tilde{\Pi}^b{}_{KL}\notag\\
	&&+2\uac{\Phi}_{tabc[M]}\left(\epsilon^{[M]IJKL}\tilde{\Pi}^a{}_{IJ}\tilde{B}^{bc}{}_{KL}+2\tilde{\nu}_d{}^{[M]}\tilde{\eta}^{tabc[d]}\right)\notag\\
	&&+\uac{\Phi}_{abcd[M]}\left(\epsilon^{[M]IJKL}\tilde{B}^{ab}{}_{IJ}\tilde{B}^{cd}{}_{KL}+(n-4)\tilde{\nu}_{t\langle e\rangle}{}^{[M]}\tilde{\eta}^{tabcd\langle e\rangle}\right)\Bigr],\label{BFact4}
\end{eqnarray}
where $\tilde{\Pi}^{aIJ}=2\tilde{B}^{taIJ}$ as above. Note that in contrast to~\eqref{BFact2}, the fact that the multiplier $\uac{\Phi}$ is not totally antisymmetric in the spacetime indices produces one more term in the action (the one involving $\uac{\Phi}_{ta tb[M]}$). The variation of~\eqref{BFact4} with respect to $\uac{\Phi}_{ta tb[M]}$, $\uac{\Phi}_{tabc[M]}$, and $\uac{\Phi}_{abcd[M]}$ gives, correspondingly, the constraints
\begin{subequations}
	\begin{eqnarray}
		\epsilon^{[M]IJKL}\tilde{\Pi}^a{}_{IJ}\tilde{\Pi}^b{}_{KL}&=&0,\label{constr12}\\
		\epsilon^{[M]IJKL}\tilde{\Pi}^a{}_{IJ}\tilde{B}^{bc}{}_{KL}+2\tilde{\nu}_{[d]}{}^{[M]}\tilde{\eta}^{tabc[d]}&=&0,\label{constr22}\\
		 \epsilon^{[M]IJKL}\tilde{B}^{ab}{}_{IJ}\tilde{B}^{cd}{}_{KL}+(n-4)\tilde{\nu}_{t\langle e\rangle}{}^{[M]}\tilde{\eta}^{tabcd\langle e\rangle}&=&0.\label{constr32}
	\end{eqnarray}
\end{subequations}
In this case, whereas~\eqref{constr12} is independent of $\tilde{B}^{abIJ}$ and~\eqref{constr32} is independent of $\tilde{\Pi}^{aIJ}$, the constraint~\eqref{constr22} blends both components and thereby fixes the relationship between them. Remarkably, the constraint~\eqref{constr12} does not involve the multiplier $\tilde{\nu}$, which simplifies things a little bit. 

\subsection{Solution of the constraints}

Although to solve~\eqref{constr12}-\eqref{constr32} for the involved variables we can rely on Ref.~\cite{Freid_Puz}, which would result in a process similar to that followed to solve~\eqref{constr1}, here we shall follow a different approach, a more algebraic one. Before going to the details, let us point out that it is equation~\eqref{constr12} that is promoted to a constraint in the canonical analysis of the action~\eqref{BFact3} carried out in Ref.~\cite{Thiemann_2013}, which later leads to the presence of second-class constraints in the theory. In contrast, here we shall solve~\eqref{constr12} from the very beginning, thus dodging the introduction of second-class constraints.

The solution of~\eqref{constr12} takes exactly the same form as~\eqref{SolPi2}~\cite{Bodendorfer_2013}, with $m^I$ given by~\eqref{mPi} (so that $m_I\tilde{\Pi}^{aI}=0$ and $m_Im^I=\sigma$ are satisfied). Now we turn our attention to the constraint~\eqref{constr22}. Multiplying it by $\epsilon_{[M]PQRS} \uac{\uac{h}}{}_{ae}\tilde{\Pi}^{eRS}$, we obtain, after a bit of algebra,
\begin{eqnarray}
	&&\tilde{B}^{bc}{}_{PQ}=2 \lambda^{bc}{}_d \tilde{\Pi}^d{}_{[P}m_{Q]}-\frac{1}{4(n-3)!}\tilde{\nu}_{[d]}{}^{[M]}\epsilon_{[M]PQRS}\uac{\uac{h}}{}_{ae}\tilde{\Pi}^{eRS}\tilde{\eta}^{tabc[d]},\label{B1}
\end{eqnarray}
where both~\eqref{SolPi2} and the identity $\uac{\uac{h}}{}_{ab}\tilde{\Pi}^{aI}\tilde{\Pi}^b{}_J=\delta^I_J-\sigma m^I m_J$
have been used, and we have defined $\lambda^{bc}{}_d:=-\sigma m^I\tilde{B}^{bc}{}_{IJ}\tilde{\Pi}^{aJ}\uac{\uac{h}}{}_{ad}$. To simplify the last term of~\eqref{B1}, notice that we can solve~\eqref{constr22} for $\tilde{\nu}_{[d]}{}^{[M]}$, giving
\begin{equation}\label{nu1}
	\tilde{\nu}_{[d]}{}^{[M]}=-\frac{2}{4!(n-4)!}\uac{\eta}_{tabc[d]}\epsilon^{IJKL[M]}\tilde{\Pi}^a{}_{IJ}\tilde{B}^{bc}{}_{KL}.
\end{equation}
The combination of this expression with~\eqref{SolPi2} implies $m_I\tilde{\nu}_{[d]}{}^{I M_1\ldots M_{n-5}}=0$, which holds for the contraction of $m_I$ with any of the internal indices of $\tilde{\nu}_{[d]}{}^{[M]}$ because they are totally antisymmetric.

Replacing~\eqref{B1} in~\eqref{nu1}, we get
\begin{eqnarray}
	\tilde{\nu}_{[d]}{}^{[M]}=&&\frac{(7-n)}{3(n-3)}\tilde{\nu}_{[d]}{}^{[M]}+\frac{(n-4)}{3(n-3)} \sum_{i=1}^{n-4}(-1)^{i-1}\uac{\uac{h}}{}_{f d_i}\tilde{\Pi}^{f[M_1|} \tilde{\Pi}^a{}_I\tilde{\nu}_{a d_1\ldots \widehat{d_i} \ldots d_{n-4}}{}^{I|M_2\ldots M_{n-4}]}.\notag\\
	\label{nu2}
\end{eqnarray}
where ``$\widehat{d_i}$'' indicates that the subscript $d_i$ is missing in the corresponding object. Note that $\lambda^{bc}{}_d$ is not present in~\eqref{nu2}, which allows us to build a recurrence relation for $\tilde{\nu}_{[d]}{}^{[M]}$ displayed below in~\eqref{nu3}.  Note also that for $n=4$ the last term in~\eqref{nu2} is not present and the equation reduces to the tautology ``$\tilde{\nu}=\tilde{\nu}$''. On the other hand, for $n>4$ the solution of~\eqref{nu2} for $\tilde{\nu}_{[d]}{}^{[M]}$ gives
\begin{eqnarray}
	\tilde{\nu}_{[d]}{}^{[M]}=&&\frac14 \sum_{i=1}^{n-4}(-1)^{i-1}\uac{\uac{h}}{}_{f d_i}\tilde{\Pi}^{f[M_1|} \tilde{\Pi}^a{}_I\tilde{\nu}_{a d_1\ldots \widehat{d_i} \ldots d_{n-4}}{}^{I|M_2 \ldots M_{n-4}]},\label{nu3}
\end{eqnarray}
which can alternatively be rewritten in a more compact form as
\begin{equation}
	\tilde{\nu}_{[d]}{}^{[M]}=\frac{(n-4)}{4}\tilde{\Pi}^{f[M_1|}\uac{\uac{h}}{}_{f[d_1|}\tilde{\Pi}^a{}_I\tilde{\nu}_{a | d_2 \ldots d_{n-4}]}{}^{I|M_2 \ldots M_{n-4}]}. \label{nu4}
\end{equation}
This expression provides a recurrence formula to obtain $\tilde{\nu}_{[d]}{}^{[M]}$, since it appears in both sides of the equation, but on the right-hand side one of its upper indices is contracted with one of its lower indices through $\tilde{\Pi}^a{}_I$. Equation~\eqref{nu4} allows us to derive the following expression:
\begin{eqnarray}
	&&\tilde{\Pi}^{a_1}{}_{I_1} \cdots \tilde{\Pi}^{a_k}{}_{I_k}\tilde{\nu}_{a_1 \ldots a_k d_{k+1} \ldots d_{n-4}}{}^{I_1 \ldots I_k M_{k+1} \ldots M_{n-4}}=\notag\\
	&&\frac{(n-4-k)}{k+4}\uac{\uac{h}}{}_{b[d_{k+1}|}\tilde{\Pi}^{b[M_{k+1}|}\tilde{\Pi}^{a_1}{}_{I_1} \cdots \tilde{\Pi}^{a_{k+1}}{}_{I_{k+1}}\tilde{\nu}_{a_1\ldots a_{k+1}| d_{k+2} \ldots d_{n-4}]}{}^{I_1 \ldots I_{k+1} |M_{k+2} \ldots M_{n-4}]},\notag\\
	\label{nu5}
\end{eqnarray}
which can be proven by induction ($k=0,1,\dots,n-5$). Using~\eqref{nu5} successively in the right-hand side of itself, we arrive at
\begin{eqnarray}
	&&\tilde{\Pi}^{a_1}{}_{I_1} \cdots \tilde{\Pi}^{a_k}{}_{I_k}\tilde{\nu}_{a_1 \ldots a_k d_{k+1} \ldots d_{n-4}}{}^{I_1\ldots I_k M_{k+1} \ldots M_{n-4}}=\notag\\
	&&\frac{(n-4-k)!}{(k+4)(k+5) \cdots (n-1)}\uac{\uac{h}}{}_{b_1[d_{k+1}|} \cdots \uac{\uac{h}}{}_{b_{n-4-k}|d_{n-4}]}\tilde{\Pi}^{b_1M_{k+1}} \cdots \tilde{\Pi}^{b_{n-4-k}M_{n-4}} \Theta,\label{nu6}
\end{eqnarray}
for $\Theta$ (of weight $n-3$) defined as
\begin{equation}
	\Theta:=\tilde{\Pi}^{a_1}{}_{I_1} \cdots \tilde{\Pi}^{a_{n-4}}{}_{I_{n-4}}\tilde{\nu}_{a_1\ldots a_{n-4}}{}^{I_1\ldots I_{n-4}}.\label{nu7}
\end{equation}
Setting $k=0$ in~\eqref{nu6}, we get
\begin{eqnarray}
	\tilde{\nu}_{[d]}{}^{[M]}=&&\frac{3!(n-4)!}{(n-1)!}\uac{\uac{h}}{}_{b_1[d_{1}|} \cdots \uac{\uac{h}}{}_{b_{n-4}|d_{n-4}]}\tilde{\Pi}^{b_1M_{1}} \cdots \tilde{\Pi}^{b_{n-4}M_{n-4}} \Theta.\label{nu8}
\end{eqnarray}
This expression is valid even for $n=4$. Using~\eqref{nu8} together with~\eqref{mPi} to rewrite the second row of~\eqref{B1}, gives
\begin{equation}
	\tilde{B}^{ab}{}_{IJ}=2\lambda^{ab}{}_c\tilde{\Pi}^c{}_{[I}m_{J]}+\sigma\uac{N}\tilde{\Pi}^{[a}{}_I\tilde{\Pi}^{b]}{}_J,\label{B2}
\end{equation} 
where we have defined $\uac{N}:=-3!(n-4)!\Theta/(n-1)!\sqrt{h}$. Notice that~\eqref{B2} is almost equal to~\eqref{BIJ6}, except for the prefactor of the terms including the product of $\tilde{\Pi}^a{}_{I}$ and $m_{J}$. However, we have not used~\eqref{constr32} yet. This will partially fix $\lambda^{ab}{}_c$ to make the previous equations equal.

From~\eqref{constr32}, we obtain the following expression for $\tilde{\nu}_{t\langle e\rangle}{}^{[M]}$:
\begin{equation}
	\tilde{\nu}_{t\langle e\rangle}{}^{[M]}=-\frac{1}{4!(n-4)!}\uac{\eta}_{tabcd\langle e\rangle}\epsilon^{IJKL[M]}\tilde{B}^{ab}{}_{IJ}\tilde{B}^{cd}{}_{KL}.\label{nu_2}
\end{equation}
Substituting~\eqref{B2} in the previous expression and plugging the result back into~\eqref{constr32}, we obtain, after the use of~\eqref{B2} and some algebra,
\begin{eqnarray}
	&&\uac{N}\epsilon^{[M]IJKL}m_I\tilde{\Pi}^e{}_J\left(-\lambda^{ab}{}_e\tilde{\Pi}^c{}_K\tilde{\Pi}^d{}_L-\lambda^{cd}{}_e\tilde{\Pi}^a{}_K\tilde{\Pi}^b{}_L+\lambda^{a[b}{}_e\tilde{\Pi}^c{}_K\tilde{\Pi}^{d]}{}_L+\lambda^{[cd}{}_e\tilde{\Pi}^{b]}{}_L\tilde{\Pi}^{a}{}_K\right)=0.\notag\\
\end{eqnarray}
Assuming $\uac{N}\neq 0$, multiplying the result by $\uac{\uac{h}}{}_{rf}\uac{\uac{h}}{}_{cm}\uac{\uac{h}}{}_{dn}\epsilon_{[M]PQRS}m^P\tilde{\Pi}^{fQ}\tilde{\Pi}^{mR}\tilde{\Pi}^{nS}$, and simplifying, we finally get
\begin{equation}
	\lambda^{ab}{}_r=N^{[a}\delta^{b]}_r,\label{lambda}
\end{equation}
for $N^a:=2\lambda^{ab}{}_b/(n-2)$. With this result, we observe that~\eqref{B2} exactly coincides with~\eqref{BIJ6}. In addition and as can be expected, we find $\tilde{\nu}_{[d]}{}^{[M]}=\sigma\tilde{\mu}_{[d]}{}^{[M]}$ and $\tilde{\nu}_{t\langle e\rangle}{}^{[M]}=\sigma\tilde{\mu}_{t\langle e\rangle}{}^{[M]}$, where the values of $\tilde{\mu}_{[d]}{}^{[M]}$ and $\tilde{\mu}_{t\langle e\rangle}{}^{[M]}$ are given in~\eqref{mu3} and~\eqref{mu4}, respectively. Therefore, the solution of~\eqref{constr12}-\eqref{constr32} involves the $n^2$ free functions $\tilde{\Pi}^{a}{}_{I}$, $\uac{N}$, and $N^a$ as in the previous section. Since the number of independent unknowns and the number of independent equations is the same as in the previous section too, the number of reducibility relations among the equations~\eqref{constr12}-\eqref{constr32} is the same as above, namely, $RR=\frac{1}{288}(n-4)(n-3)n^2(n+1)(n+2)(n^2-2n+9)$.

In conclusion, we have derived the same expressions~\eqref{SolPi2} and~\eqref{BIJ6} from the solution of the constraints~\eqref{constr12}-\eqref{constr32}. Substituting them into the action~\eqref{BFact4} leads to the same intermediate action~\eqref{action_Pi}, which produces the canonical formulation embodied in~\eqref{final_action}. Therefore, we have established that the manifestly $SO(n-1,1)$ [or, alternatively $SO(n)$] covariant canonical formulation~\eqref{final_action}, which was originally derived from the canonical analysis of the $n$-dimensional Palatini action without a cosmological constant, can also be derived from the $BF$-type actions~\eqref{BFact1} and~\eqref{BFact3}, something that in turn demonstrates the classical equivalence of both actions at the Hamiltonian level.

We remark that in order to arrive at the expressions~\eqref{SolPi2} and~\eqref{BIJ6}, the constraint~\eqref{constr1} can also be handled in a similar fashion as we did for the constraints of this section, although things get more complicated. Actually,~\eqref{constr12} can also be extracted from~\eqref{constr1}, but the fact that the latter mixes $\tilde{\Pi}^{aIJ}$, $\tilde{B}^{abIJ}$ and $\tilde{\mu}$ makes it more difficult to isolate~\eqref{constr12}.

\section{Conclusions}

In this paper we have performed, in a manifestly $SO(n-1,1)$ [or, alternatively $SO(n)$] covariant fashion, the canonical analysis of the formulation as a constrained $BF$ theory of general relativity in $n$-dimensions with a vanishing cosmological constant (we assume $n>4$, but our results hold in $n=4$ as well). Here we have considered the two formulations depicted in~\eqref{BFact1} and~\eqref{BFact3}, which differ in the form that the constraint on the $B$ field is imposed, but that nonetheless are classically equivalent. 

Our strategy is similar to that of Ref.~\cite{BFNoSCC}, where the case of $BF$ gravity in four dimensions with a cosmological constant and the Immirzi parameter is addressed (those results hold even if those two parameters are missing). Consequently, this work extends to higher dimensions the results of Ref.~\cite{BFNoSCC}. Our approach consisted in first performing the $(n-1)+1$ decompositions of the actions~\eqref{BFact1} and~\eqref{BFact3}, from which we read off the constraints imposed by the Lagrange multipliers of the corresponding formulation. We then proceeded to solve these constraints, which leads to a parametrization of all the components of the $B$ field, $\tilde{B}^{taIJ}\ (=\tilde{\Pi}^{aIJ}/2)$ and $\tilde{B}^{abIJ}$, in terms of the $n^2$ functions $\uac{N}$ (lapse), $N^a$ (shift), and $\tilde{\Pi}^{aI}$, as given by~\eqref{SolPi2} and~\eqref{BIJ6}. As established in this paper, both actions lead to the same parametrization, as expected. The substitution of these expressions in the original action then reduces it to~\eqref{action_Pi}, which constitutes an intermediate step in the canonical analysis of the $n$-dimensional Palatini action. By invoking the results of Ref.~\cite{PalatininD}, this produces the formulation~\eqref{final_action}, in which only first-class constraints are involved and the full gauge symmetry under $SO(n-1,1)$ or $SO(n)$ is kept manifest. We then end up with a phase space parametrized by the $2 n(n-1)$ variables $({\cal{Q}}_{aI},\tilde{\Pi}^{aI})$ subject to the $n(n+1)/2$ first-class constraints~\eqref{Gauss2}-\eqref{scalar2}, which leaves a total of $n(n-3)/2$ propagating degrees of freedom, the same as general relativity in $n$ dimensions. Notice that we have accomplished this canonical formulation without following the cumbersome and lengthy Dirac's approach that consists in enlarging phase space, introducing first-class and second-class constraints, and then explicitly solving the latter~\cite{dirac1964lectures}.

Although both actions~\eqref{BFact1} and~\eqref{BFact3} give rise to the same solution of the constraints and hence to the same canonical theory, we followed different approaches to deal with each case. In the case of the action~\eqref{BFact1} we obtained the two equations~\eqref{constr1}-\eqref{constr2}. It turns out that to find the required solution of the $B$ field, equation~\eqref{constr1} is enough, which indicates that the equation~\eqref{constr2} is superfluous and actually determines reducibility conditions for the whole system of equations. In constrast, for the action~\eqref{BFact3} we obtained the three equations~\eqref{constr12}-\eqref{constr32}, which were solved following a more algebraic approach than that for the first action. In this case, although there is reducibility among the equations, the three equations~\eqref{constr12}-\eqref{constr32} were used to find the proper solution of the $B$ field. It is worth mentioning that even though the issue of reducibility of the constraints is highly nontrivial in these $BF$-type formulations (see Ref.~\cite{Freid_Puz}), we did not need to deal with it in order to find the solution of the $B$ field.

%%%%%%%%%%%%%%%%%%%%
\acknowledgments

We thank Diego Gonzalez for useful comments and suggestions on a preliminary version of this manuscript. This work was partially supported by Fondo SEP-Cinvestav and by Consejo Nacional de Ciencia y Tecnolog\'{i}a (CONACyT), M\'{e}xico, Grant No.~A1-S-7701. M.~C. gratefully acknowledges the support of a DGAPA-UNAM postdoctoral fellowship.

\appendix

\section{Alternative derivation of~\eqref{SolPi} and~\eqref{BIJ4}}\label{appendix}

The constraints~\eqref{constr2} are only present for $n>4$. Interestingly, they do not involve the components $\tilde{\Pi}^{aIJ}$. Instead of proceeding as in Sec.~\ref{secBF1}, where to solve the whole system of equations~\eqref{constr1} and~\eqref{constr2} it is enough to pay attention just to~\eqref{constr1}, we can start by first tackling~\eqref{constr2}. To solve it, we can follow the same strategy used in Sec.~\ref{secBF1}, separating the equation~\eqref{constr2} in different instances depending on the values of the internal indices as we did for~\eqref{constr1}. In this case, because $\tilde{B}^{abIJ}$ has two spatial indices and two internal indices, the procedure is exactly the same of Ref.~\cite{{Freid_Puz}}. Taking those results for granted, the solution for $\tilde{B}^{abIJ}$ is then given by~\eqref{BIJ4}, $\tilde{B}^{ab}{}_{IJ}=2\epsilon V^{[a}{}_{I}V^{b]}{}_{J}$, for some object $V^a{}_I$ of weight $+1/2$.

In order to fix $\tilde{\Pi}^{aIJ}$, we have to employ~\eqref{constr1}, which is equivalent to~\eqref{constr3}. Since we want an equation involving only $\tilde{\Pi}^{aIJ}$ and $\tilde{B}^{abIJ}$, we replace~\eqref{mu2} in~\eqref{V} to express $\tilde{\tilde{\mathcal{V}}}{}^{abc}{}_{IJKL}$ in terms of them. Substituting this result in~\eqref{constr3}, the latter becomes
\begin{eqnarray}
	&&2\tilde{\Pi}^{[a}{}_{IJ}\tilde{B}^{bc]}{}_{KL}+2\tilde{\Pi}^{[a}{}_{KL}\tilde{B}^{bc]}{}_{IJ}-\tilde{\Pi}^{[a}{}_{IL}\tilde{B}^{bc]}{}_{JK}\notag\\
	&&-\tilde{\Pi}^{[a}{}_{JK}\tilde{B}^{bc]}{}_{IL}-\tilde{\Pi}^{[a}{}_{IK}\tilde{B}^{bc]}{}_{LJ}-\tilde{\Pi}^{[a}{}_{LJ}\tilde{B}^{bc]}{}_{IK}=0.\label{constr4}
\end{eqnarray}

Let us define the symmetric object $\tilde{k}^{ab}:=V^{aI}V^b{}_I$ and assume that $k:=\det(\tilde{k}^{ab})$, $k\neq0$ (of weight $n-3$) is nonvanishing (otherwise, there would be nontrivial relations among the $V$'s). Likewise, we can introduce the orthogonal unit vector to $V^{aI}$ that fulfills $V^{aI}n_I=0$ and $n^In_I=\sigma'$, where $\sigma':=\sigma \, \text{sgn}(k)$; in terms of $V^{aI}$, this vector takes the form
\begin{equation}
	\label{n}
	n_{I}=\frac{1}{(n-1)! \sqrt{|k|}}\epsilon_{IJ_1 \ldots J_{n-1}}\underaccent{\tilde}{\eta}_{t a_1 \ldots a_{n-1}} V^{a_1 J_1}  \cdots V^{a_{n-1} J_{n-1}}
\end{equation}
and satisfies the identity $\uac{k}{}_{ab}V^{aI}V^b{}_J=\delta^I_J-\sigma' n^I n_J$, for $\uac{k}{}_{ab}$ the inverse of $\tilde{k}^{ab}$.

Substituting~\eqref{BIJ4} in~\eqref{constr4}, multiplying the result by $\uac{k}{}_{bd}\uac{k}{}_{ce}V^{dK}V^{eL}$, using the properties of $n^I$ and simplifying, we arrive at
\begin{eqnarray}
	\tilde{\Pi}^a{}_{IJ}=-\frac{2\sigma'}{n-2}n_{[I}\tilde{\Pi}^a{}_{J]K}n^K+\frac{2}{n-2}V^a{}_{[I}\lambda_{J]}-\frac{4\sigma'}{(n-1)(n-2)}\lambda_K n^KV^a{}_{[I}n_{J]},\label{Pi2}
\end{eqnarray}
where we have defined $\lambda_I:=\uac{k}{}_{ab}V^{aJ}\tilde{\Pi}^b{}_{JI}$ (of weight $+1/2$). Contracting~\eqref{Pi2} with $n^J$, we get $\tilde{\Pi}^a{}_{IJ}n^J=\lambda_J n^J V^a{}_{I}/(n-1)$, which plugged back in~\eqref{Pi2} leads to
\begin{equation}
	\tilde{\Pi}^a{}_{IJ}=2\epsilon V^a{}_{[I}M_{J]},\label{Pi3}
\end{equation}
with $M_I$ defined as
\begin{equation}
	M_I:=\frac{\epsilon}{n-2}\left(\lambda_I-\frac{\sigma'}{n-1}\lambda_J n^J n_I\right).
\end{equation}
Therefore, we have obtained~\eqref{SolPi}.

% The bibliography will probably be heavily edited during typesetting.
% We'll parse it and, using the arxiv number or the journal data, will
% query inspire, trying to verify the data (this will probalby spot
% eventual typos) and retrive the document DOI and eventual errata.
% We however suggest to always provide author, title and journal data:
% in short all the informations that clearly identify a document.

\bibliography{references}

% Please avoid comments such as "For a review'', "For some examples",
% "and references therein" or move them in the text. In general,
% please leave only references in the bibliography and move all
% accessory text in footnotes.

% Also, please have only one work for each \bibitem.

\end{document}